\documentclass[conference]{IEEEtran}
\IEEEoverridecommandlockouts
\usepackage{cite}

\usepackage{booktabs}       % professional-quality tables
\usepackage{amsfonts}       % blackboard math symbols
\usepackage{nicefrac}       % compact symbols for 1/2, etc.
\usepackage{microtype}      % microtypography
\usepackage{xcolor}         % colors
\usepackage{amsthm}
\usepackage{amsmath}
\usepackage{amssymb}
\usepackage{mathtools}
\usepackage{graphicx}
\usepackage{subcaption}

\usepackage{hyperref}

\usepackage{amsmath,amssymb,amsfonts}
\usepackage{algorithmic}
\usepackage{graphicx}
\usepackage{textcomp}
\usepackage{xcolor}
\def\BibTeX{{\rm B\kern-.05em{\sc i\kern-.025em b}\kern-.08em
    T\kern-.1667em\lower.7ex\hbox{E}\kern-.125emX}}

\newtheorem{theorem}{Theorem}[section]

\newtheorem{definition}[theorem]{Definition}

\newcommand{\vect}[1]{\boldsymbol{#1}}

\begin{document}

\title{Simplify to Amplify: Achieving Information-Theoretic Bounds with Fewer Steps in Spectral Community Detection}

\author{\IEEEauthorblockN{1\textsuperscript{st} Sie Hendrata Dharmawan}
\IEEEauthorblockA{\textit{Thayer School Of Engineering} \\
\textit{Dartmouth College}\\
Hanover, NH, USA \\
0009-0008-5078-9907}
\and
\IEEEauthorblockN{2\textsuperscript{nd} Peter Chin}
\IEEEauthorblockA{\textit{Thayer School Of Engineering} \\
\textit{Dartmouth College}\\
Hanover, NH, USA \\
0000-0002-1913-4223}
}

\maketitle

\begin{abstract}
We propose a streamlined spectral algorithm for community detection in the two-community stochastic block model (SBM) under constant edge density assumptions. By reducing algorithmic complexity through the elimination of non-essential preprocessing steps, our method directly leverages the spectral properties of the adjacency matrix. We demonstrate that our algorithm exploits specific characteristics of the second eigenvector to achieve improved error bounds that approach information-theoretic limits, representing a significant improvement over existing methods. Theoretical analysis establishes that our error rates are tighter than previously reported bounds in the literature. Comprehensive experimental validation confirms our theoretical findings and demonstrates the practical effectiveness of the simplified approach. Our results suggest that algorithmic simplification, rather than increasing complexity, can lead to both computational efficiency and enhanced performance in spectral community detection.
\end{abstract}

\begin{IEEEkeywords}
random graphs, graph theory, statistical inference, asymptotic analysis, stochastic block model, community detection, edge independence, spectral clustering
\end{IEEEkeywords}

\section{Introduction}
\label{section:intro}

Community detection represents a fundamental challenge in statistics, theoretical computer science, and image processing. The stochastic block model (SBM) serves as a prominent theoretical framework for analyzing this problem. In its simplest form, the model consists of two equal-sized blocks $V_1$ and $V_2$, each containing $n$ vertices. A random graph is generated according to the following distribution: edges between vertices within the same block occur with probability $\frac{a}{n}$, while edges between vertices in different blocks occur with probability $\frac{b}{n}$, where $a > b > 0$. Given such a graph, various algorithms exist for block recovery \cite{chin2015stochasticblockmodelcommunity}, \cite{Bui1984GraphBA}, \cite{Dyer1989TheSO}, \cite{McSherry2001SpectralPO}, \cite{CojaOghlan2009GraphPV}. 

In the sparse graph case, with high probability, the graph contains a linear fraction of isolated vertices \cite{Bollobás_2001}. Since these isolated vertices lack connectivity information, perfect recovery of the community structure is impossible. However, we can still accurately recover a substantial portion of each block. Formally, we  would like to find a partition of $V_1',V_2'$ of $V = V_1 \cup V_2$ such that $V_i$ and $V_i'$ are very close to each other. To quantify the recovery accuracy, we introduce the following definition:

\begin{definition}
A collection of subsets $V_1', V_2'$ of $V_1 \cup V_2$ is $\gamma$-correct if $|Vi \cap V_i'| \geq (1-\gamma)n,i=1,2$.
\end{definition}

We would like to devise an algorithm that can guarantee $\gamma$-correctness for small $\gamma$ with high probability in polynomial time. In \cite{CojaOghlan2009GraphPV}, Coja-Oghlan proved

\begin{theorem}
\label{Coja-Oglan}
For any constant $\gamma > 0$, there exist constants $C_1, C_2 >0$ such that if $a,b > C_1$ and $\frac{(a-b)^2}{a+b} > C_2 \log(a+b)$, one can find a $\gamma$-correct partition using a polynomial time algorithm.
\end{theorem}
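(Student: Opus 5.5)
The plan is a spectral algorithm on a regularized adjacency matrix, followed by a correction step. Let $A$ be the adjacency matrix of the observed graph on $2n$ vertices, and let $\bar A=\mathbb{E}A$. Up to diagonal terms, $\bar A$ has rank two. Its eigenvalues are $a+b$, with eigenvector $\mathbf 1$, and $a-b$, with eigenvector $\sigma$, where $\sigma_v=\pm1$ according to the block of $v$. The strategy is to show that the top two-dimensional eigenspace of a suitably modified $A$ is close to $\mathrm{span}(\mathbf 1,\sigma)$. Rounding a vector from that space then misclassifies only a small fraction of vertices, and a local majority-vote step brings the error down to $\gamma$.

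\textbf{Step 1 (regularization).} In the sparse regime the spectral norm $\|A-\bar A\|$ is not $O(\sqrt{a+b})$. Vertices of degree $\Theta(\log n/\log\log n)$ create eigenvalues of order $\sqrt{\log n/\log\log n}$, which swamp $a-b$. So I would first delete every vertex whose degree exceeds $C(a+b)$; equivalently, zero out its row and column, obtaining $A'$. A Chernoff bound shows that only $e^{-\Omega(a+b)}n$ vertices are removed, and $a,b>C_1$ makes this fraction below $\gamma/10$. I would then invoke the known bound (Feige--Ofek, or the Le--Levina--Vershynin form) $\|A'-\bar A'\|\le C_3\sqrt{a+b}$ with high probability. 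Here $\bar A'$ is the expectation restricted to the kept vertices, which still has rank at most two and differs from $\bar A$ only on few rows.

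\textbf{Step 2 (eigenvector perturbation).} Let $U$ be the span of the top two eigenvectors of $A'$. Davis--Kahan with eigengap $\approx a-b$ gives
\[
\sin\angle(U,\mathrm{span}(\mathbf 1,\sigma))\le \frac{C_3\sqrt{a+b}}{a-b}.
\]
Projecting onto $U$ and taking the direction orthogonal to the projection of $\mathbf 1$ produces a unit vector $x$ with $\|x-\sigma/\sqrt{2n}\|^2\le C_4(a+b)/(a-b)^2$. Partition by the sign of $x_v$. Each misclassified vertex contributes at least $1/(2n)$ to the squared error, so at most $2C_4 n(a+b)/(a-b)^2$ vertices are wrong. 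Under $(a-b)^2/(a+b)>C_2\log(a+b)$ this is already a small constant fraction $\gamma_0$.

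\textbf{Step 3 (correction).} To reach arbitrary $\gamma$, I would refine the partition. First split the edges randomly into two independent-ish halves, so the initial partition is independent of the edges used for correction. Then reassign each vertex to the side containing more of its neighbors in the second half. For a fixed vertex the correct side wins in expectation by about $(a-b)(1-2\gamma_0)/2$, against fluctuations of order $\sqrt{a+b}$. A Chernoff/Bernstein bound then gives a per-vertex error probability $\exp(-c(a-b)^2/(a+b))\le (a+b)^{-cC_2}$, and Markov's inequality turns this into a fraction at most $\gamma$ with high probability. The deleted high-degree vertices are assigned arbitrarily and are already counted in the budget. The $\log(a+b)$ factor in the hypothesis is exactly what makes the tail beat the $e^{-\Theta(a+b)}$ losses and close the bound.

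The main obstacle is Step 1: proving the $O(\sqrt{a+b})$ spectral norm bound for the trimmed matrix uniformly, which requires the combinatorial light/heavy-pair discrepancy argument. There is also a dependence issue in Step 3, since the error set of Step 2 depends on the graph. I would handle it through the edge-splitting, accepting the small correlation between the two halves and controlling it by a union bound over sets of size $\gamma_0 n$.
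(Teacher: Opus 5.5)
The paper gives no proof of this statement. It is quoted from Coja-Oghlan as background, and within the paper it follows at once from Theorem~\ref{theorem:spectral1} or Theorem~\ref{theorem:chin-theorem}. Because $C_1,C_2$ may depend on $\gamma$, you can take $C_1\ge e$ so that $\log(a+b)\ge 1$, then enlarge $C_2$ by a factor $1/\gamma^2$ (resp.\ $\log(2/\gamma)$).

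Your proposal instead rebuilds the Chin--Rao--Vu pipeline of Section~\ref{section:spectral_Chin}: trimmed Spectral Partition, then Red/Blue Correction. Steps 1--2 are sound, and your Step 2 bound is legitimate. Orthogonalizing against $P_U\mathbf{1}$ costs only a constant factor over Davis--Kahan, so you really do get $\sin\angle\lesssim\sqrt{a+b}/(a-b)$. That is stronger than the bound quoted in Theorem~\ref{theorem:vector_angle_bound_Chin}.

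However, you have misread the quantifiers: Step 2 already finishes the proof. The misclassified fraction is at most $2C_4/(C_2\log(a+b))+e^{-\Omega(a+b)}$. This is below $\gamma$ once $C_2\ge 4C_4/\gamma$ and $C_1$ is large. So Step 3 is unnecessary, and the $\log(a+b)$ factor does no work. The loss from deleted vertices is absorbed by choosing $C_1$ large, not by the logarithm as your closing remark suggests.

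That is fortunate, because Step 3 as written has loose ends. Markov only yields failure probability about $(a+b)^{-cC_2}/\gamma$, which is a constant rather than $o(1)$ for fixed $a,b$. The Red/Blue dependence is handled only by a union bound over $\binom{2n}{\gamma_0 n}$ error sets that you have not checked.

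On one point your route is better than the paper's: Step 1 is not optional here. For constant $a,b$ the maximum degree is of order $\log n/\log\log n$. Hence the untrimmed $\|A-\bar A\|$ is at least of order $\sqrt{\log n/\log\log n}$, and the paper's deletion-free simplification does not cover this statement. Its appendix bound via F\"uredi--Koml\'os needs the $O(n^{1/3}\log n)$ error term to be negligible next to $\sigma\sqrt n\approx\sqrt a$. That happens only when $a$ grows with $n$. Your trimmed version, backed by the Feige--Ofek / Le--Levina--Vershynin bound, does not have this problem.
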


In \cite{chin2015stochasticblockmodelcommunity}, Chin et al. introduced a Spectral Algorithm that achieves exponential bounds on the incorrect recovery rate in the case of a sparse graph. 

\begin{theorem}
\label{theorem:chin-theorem}
    There are constants $C_1, C_2 >0$ such that the following holds. For any constants $a>b>C_1$ and $\gamma > 0$ satisfying
    \begin{equation}
        \frac{(a-b)^2}{a+b} \geq C_2 \log \frac{2}{\gamma}
    \end{equation}
    one can find a $\gamma$-correct partition with probability $1-o(1)$ using a simple spectral algorithm.
\end{theorem}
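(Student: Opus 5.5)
We prove Theorem~\ref{theorem:chin-theorem} with the spectral algorithm of \cite{chin2015stochasticblockmodelcommunity}: spectral partition, then a majority-vote correction. Let $A$ be the $2n\times 2n$ adjacency matrix and $\bar A=\mathbb{E}A$. Up to the diagonal, $\bar A$ has rank two, with eigenvalues $a+b$ (eigenvector $\mathbf 1$) and $a-b$ (eigenvector $u_2=\pm 1$ according to the block). Write $E=A-\bar A$.

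\textbf{Step 1: remove high-degree vertices.} In the sparse regime $\|E\|$ is not $O(\sqrt{a+b})$, because of vertices of degree $\gg a+b$. We delete, i.e.\ zero out the rows and columns of, all vertices of degree greater than $20(a+b)$; with high probability only $e^{-\Omega(a+b)}n$ of them exist. For the resulting matrix $A_0$ we invoke the known bound (Feige--Ofek type) that $\|A_0-\bar A_0\|\le C\sqrt{a+b}$ with probability $1-o(1)$. This spectral-norm concentration is the main technical obstacle. The first approach to try is the standard $\epsilon$-net argument over the unit sphere: split pairs $(x_i,x_j)$ into light pairs, handled by Bernstein's inequality plus a union bound, and heavy pairs, handled by the discrepancy property of the trimmed graph.

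\textbf{Step 2: spectral partition.} Compute the top two-dimensional eigenspace of $A_0$ and project onto it. By Davis--Kahan with eigengap $a-b$,
\[
\sin\angle(\hat u,u_2)\le \frac{C\sqrt{a+b}}{a-b}.
\]
We threshold the coordinates of the appropriate vector in this space by sign. The number of misclassified vertices $m$ then satisfies $m\cdot\Omega(1/n)\le \|\hat u-u_2\|^2$, so
\[
m\le \frac{C'(a+b)}{(a-b)^2}\,n .
\]
When $(a-b)^2/(a+b)\ge C_2$ is large, this yields a partition that is $\gamma_0$-correct for a small constant $\gamma_0$, say $0.1$. This alone gives only a polynomial dependence on $\gamma$.

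\textbf{Step 3: correction to exponential accuracy.} To avoid dependence, split the edges at random into a red and a blue graph. Use the red graph for Steps 1--2 to get an initial partition $(W_1,W_2)$ that is $0.1$-correct. Then reassign each vertex $v$ by majority: put $v$ in $W_1$'s final class if its blue neighbors in $W_1$ outnumber those in $W_2$. Conditioned on the red graph, for $v\in V_1$ the count $N_1-N_2$ is a difference of independent binomial sums with mean at least $c(a-b)$ (the $0.1$ error only shrinks the gap by a constant factor) and variance $O(a+b)$. Chernoff gives
\[
\Pr[v\text{ misclassified}]\le \exp\!\left(-c\frac{(a-b)^2}{a+b}\right)\le \frac{\gamma}{2},
\]
using $(a-b)^2/(a+b)\ge C_2\log(2/\gamma)$ with $C_2$ large. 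Markov's inequality, or a second-moment bound, since misclassification events are only weakly dependent through the blue edges, then shows that with probability $1-o(1)$ at most $\gamma n$ vertices per block are wrong. The deleted high-degree vertices, being only $e^{-\Omega(a+b)}n$, are absorbed into this count.
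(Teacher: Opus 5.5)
Your architecture matches the one the paper attributes to Chin et al.: trimmed spectral partition on the Red graph to reach constant accuracy, then a neighbor-count Correction on the Blue graph (their Lemma 2.3). Steps 1--2 are sound. Your linear bound $\sin\angle(\hat u,u_2)=O(\sqrt{a+b}/(a-b))$ holds once $\hat u$ is the unit vector of $W$ orthogonal to $P_W\mathbf 1$, and it is sharper than the square-root bound the paper quotes.

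The step that fails is the mean estimate in Step 3. It fails because you replaced the paper's balanced split (top $n$ versus bottom $n$ entries of $\vect{v_2}$) with sign thresholding. Put $m_1=|W_2\cap V_1|$ and $m_2=|W_1\cap V_2|$. For $v\in V_1$, ignoring the negligible effect of conditioning on the Red edges,
\[
\mathbb{E}[N_1-N_2]=\frac{1}{2n}\Bigl[(a-b)(n-m_1-m_2)-(a+b)(m_1-m_2)\Bigr].
\]
For $v\in V_2$ the same formula holds for $N_2-N_1$ with $m_1,m_2$ interchanged. Sign thresholding controls $m_1-m_2$ only through $|m_1-m_2|\le m_1+m_2\le 0.1n$. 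So the imbalance term can be as large as $0.1(a+b)$, which swamps $a-b$ in exactly the regime the theorem targets, $a-b\asymp\sqrt{(a+b)\log(2/\gamma)}\ll a+b$. The mean can then turn negative for one community, and the vote sends most of that community into the larger class. Even with your sharper estimate $m_1+m_2\le C'(a+b)n/(a-b)^2$, you would need $(a-b)^3\gtrsim(a+b)^2$, which the hypothesis does not give. So ``the $0.1$ error only shrinks the gap by a constant factor'' is true only for balanced partitions.

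The repair is cheap. One option is to output the balanced split, as the paper's Spectral Partition does. Then $m_1=m_2=m$ and the mean becomes $\tfrac12(a-b)(1-2m/n)\ge 0.4(a-b)$. The count $m\le 2n\|\vect{v_2}-\vect{u_2}\|^2$ survives: according to the sign of the median, the $m$ misplaced vertices of one block all have entries of the wrong sign. This balance is exactly why Chin et al.'s Correction can use the fixed threshold $(a+b)/4$. The other option is to compare the densities $N_1/|W_1|$ and $N_2/|W_2|$. Their expected difference is $\frac{a-b}{2n}$ times the difference between the $V_1$-fractions of $W_1$ and $W_2$, hence at least $0.8\cdot\frac{a-b}{2n}$ regardless of imbalance.

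A smaller point concerns the final probability bound. Markov's inequality alone bounds the failure probability only by a constant, since $\gamma$ is fixed while $n\to\infty$, so you need the second-moment route. It works because, given the Red graph, the error indicators of $u$ and $v$ share only the Blue pair $\{u,v\}$. Each covariance is therefore $O(a/n)$ and the variance of the error count is $O(an)$. Chebyshev then gives failure probability $O(a/(\gamma^2 n))=o(1)$.
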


Theorem \ref{theorem:chin-theorem} improves the relation between the accuracy $\gamma$ and the ratio $\frac{(a-b)^2}{a+b}$. Moreover, this bound is asymptotically sharp because according to \cite{zhang2015minimaxratescommunitydetection}, there exists a constant $c > 0$ such that when 
\begin{equation}
    \label{eq:zhang}
    \frac{(a-b)^2}{a+b} \leq c \log \frac{1}{\gamma}
\end{equation}
one \textbf{cannot} recover a $\gamma$-correct partition (in expectation), regardless of the algorithm.

The standard Spectral Algorithm comprises two stages: \textbf{Spectral Partition} and \textbf{Correction} (detailed in Section \ref{section:spectral_Chin}). Previous work established that Spectral Partition alone achieves only inverse-square correctness rates, requiring the Correction step to reach the desired inverse-log relationship. However, our experiments reveal that Spectral Partition actually produces inverse-log performance without correction, suggesting this additional step is unnecessary.

Our theoretical analysis identifies a non-tight lemma in the original proof that underestimates the algorithm's performance. We provide improved bounds and experimentally demonstrate that these bounds are sharp, eliminating the need for the Correction step to achieve the inverse-log rates claimed in \cite{chin2015stochasticblockmodelcommunity}. Additionally, we streamline the Spectral Partition itself by removing redundant operations, ensuring that the resulting vectors maintain statistical independence, a property that will prove valuable for future algorithmic improvements (discussed in Section \ref{section:conclusion}).

The rest of this paper is organized as follows: Section \ref{section:spectral_Chin} presents the original Spectral Algorithm and our simplified version. Section \ref{section:improved_bounds} shows that our simplification maintains and improves theoretical bounds. Section \ref{section:experiments} validates our predictions experimentally. Section \ref{section:conclusion} summarizes our findings and discusses future work.

\section{Original Spectral Algorithm}
\label{section:spectral_Chin}

In \cite{chin2015stochasticblockmodelcommunity}, Chin et al. gave the Spectral Algorithm that guarantees the result in Theorem \ref{theorem:chin-theorem}. But first let us define some variables. Let $A$ denote the adjacency matrix of a random graph generated from the distribution described in Section \ref{section:intro}. And let $A_E = \mathbb{E}[A]$ be the expected adjacency matrix, with entries $a/n$ and $b/n$. Then $A_E$ is a rank two matrix with two non-zero eigenvalues $\lambda_1 = a+b$ and $\lambda_2 = a-b$. Then unit eigenvector $\vect{u_1}$ corresponding to the eigenvalue $a+b$ has coordinates:
\begin{equation}
    \vect{u_1}(i) = \frac{1}{\sqrt{2n}} \forall i = 1,\dots, 2n
\end{equation}

while the unit eigenvector $\vect{u_2}$ corresponding to the eigenvalue $a-b$ has coordinates
\begin{equation}
    \vect{u_2}(i) =
    \begin{cases*}
      \frac{1}{\sqrt{2n}} & if $i \in V_1$ \\
      -\frac{1}{\sqrt{2n}}        & if $i \in V_2$
    \end{cases*}
  \end{equation}

\begin{figure}[ht]
\vskip 0.2in
\fbox{
\begin{minipage}{3.3in}
{\bf Spectral Partition.} 
\begin{enumerate}
\item Input the adjacency matrix $A,d:=a+b$. 
	\item  Zero out all the rows and columns of $A$ corresponding to vertices whose degree is bigger than $20 d$, to obtain the matrix $A'$. 
	\item Find the eigenspace $W$ corresponding to  the top two eigenvalues of $A'$.
	\item Compute  $\vect{v_1}$,  the projection of all-ones vector to $W$
	\item Let $\vect{v_2}$ be the unit vector in $W$ perpendicular to $\vect{v_1}$.
	\item Sort the vertices according to their values in $\vect{v_2}$, and let $V'_1 \subset V$ be the top $n$ vertices, and $V'_2 \subset V$ be the remaining $n$ vertices 
	\item Output $(V'_1,V'_2)$.
\end{enumerate}
\end{minipage}
}
\caption{Spectral Partition}
\label{fig:spectral1}
\end{figure}

The second eigenvector $\vect{u_2}$ of the expected adjacency matrix $A_E$ encodes the true community structure. Let $\vect{w_1}$ and $\vect{w_2}$ denote the first and second eigenvectors of the observed adjacency matrix $A$, respectively. Our goal is to use $\vect{w_2}$ as a proxy for the unknown $\vect{u_2}$. The \textbf{Spectral Algorithm} in Figure \ref{fig:spectral1} produces vector $\vect{v_2}$ that closely approximates $\vect{u_2}$, achieving the following result:

\begin{theorem}
\label{theorem:spectral1}
    There are constants $C_1, C_2 >0$ such that the following holds. For any constants $a>b>C_1$ and $\gamma > 0$ satisfying
    \begin{equation}
        \frac{(a-b)^2}{a+b} \geq C_2 \frac{1}{\gamma^2}
    \end{equation}
    one can find a $\gamma$-correct partition with probability $1-o(1)$ using \bf{Spectral Partition}.
\end{theorem}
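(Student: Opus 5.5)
The plan is the standard perturbation argument: bound the spectral norm of $E := A' - A_E$, use the Davis--Kahan $\sin\Theta$ theorem to show that the two-dimensional space $W$ is close to $\mathrm{span}(\vect{u_1},\vect{u_2})$, conclude that $\vect{v_2}$ is close to $\pm\vect{u_2}$ in $\ell_2$, and finally turn that $\ell_2$ closeness into a count of misclassified vertices.

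\textbf{Step 1 (norm bound).} I will invoke the known concentration result for sparse random graphs after removing high-degree vertices (Feige--Ofek, as used in \cite{chin2015stochasticblockmodelcommunity}). It gives, with probability $1-o(1)$,
\[
\|A' - A_E\| \le C\sqrt{a+b}.
\]
The zeroing step removes only $e^{-\Omega(d)}n$ vertices with high probability. Zeroing the corresponding rows and columns of $A_E$ as well changes $A_E$ by a matrix of norm $O(\sqrt{d})$, so the bound persists when $A_E$ is compared directly with $A'$. This is the main technical obstacle. I would take it as the cited black box rather than reprove it, checking only that the degree threshold $20d$ fits its hypotheses.

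\textbf{Step 2 (subspace perturbation).} By Weyl's inequality, the eigenvalues of $A'$ are within $C\sqrt{d}$ of $a+b$, $a-b$, and $0$. The gap separating the top two from the rest is therefore at least $a-b-2C\sqrt{d}$, which is $\gtrsim a-b$ once $(a-b)^2/(a+b)\ge C_2$ with $C_2$ large. Davis--Kahan then gives
\[
\|\sin\Theta(W,U)\| \le \frac{C'\sqrt{a+b}}{a-b}=:\epsilon, \qquad U=\mathrm{span}(\vect{u_1},\vect{u_2}).
\]
Next I handle the vectors themselves. Let $P_W$ be the projection onto $W$. Then $\vect{v_1}=P_W\mathbf{1}$ satisfies $\|\vect{v_1}/\sqrt{2n}-\vect{u_1}\|\le\epsilon$. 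The unit vector $\vect{v_2}\in W$ orthogonal to $\vect{v_1}$ is the image of the orthogonal complement of $\vect{u_1}$ inside $U$, up to $O(\epsilon)$. Hence, after fixing the sign,
\[
\|\vect{v_2}-\vect{u_2}\|\le C''\epsilon.
\]
This uses only elementary two-dimensional geometry: projecting $\vect{u_2}$ to $W$ and orthogonalizing against $\vect{v_1}$ moves it by $O(\epsilon)$.

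\textbf{Step 3 (counting errors).} The entries of $\vect{u_2}$ are $\pm 1/\sqrt{2n}$. Sorting by $\vect{v_2}$ and taking the top $n$ misclassifies vertices of $V_1$ only when they fall below the median threshold $t$, and every such error is paired with a vertex of $V_2$ placed above $t$. For each such pair, at least one member has $|\vect{v_2}(i)-\vect{u_2}(i)|\ge 1/\sqrt{2n}$. So if $m$ vertices of $V_1$ are misplaced, then
\[
m\cdot\frac{1}{2n}\le\|\vect{v_2}-\vect{u_2}\|^2\le C''^2\epsilon^2,
\]
which gives $m\le 2C''^2 n\,\frac{a+b}{(a-b)^2}$. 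Requiring this to be at most $\gamma n$ means $\frac{(a-b)^2}{a+b}\ge C_2/\gamma$, which is already implied by the stated hypothesis $C_2/\gamma^2$ when $\gamma\le1$; for $\gamma>1$ the statement is trivial. The same count applies to $V_2$ by symmetry, giving $\gamma$-correctness with probability $1-o(1)$.

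The main thing to verify carefully is Step 1, including the effect of the zeroed vertices on $A_E$. The remaining steps are routine.
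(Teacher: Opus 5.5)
Your proof is correct, and its skeleton matches the paper's. The paper's proof is just a chain of results cited from Chin et al.: the norm bound of Theorem~\ref{theorem:M_bound_Chin}, the vector-angle bound of Theorem~\ref{theorem:vector_angle_bound_Chin}, and the counting inequality $\gamma \le \frac{4}{3}\sin^2\angle(\vect{u_2},\vect{v_2})$, which together give Theorem~\ref{theorem:gamma_bound_quad_Chin}.

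The real difference is in the vector step. The paper's chain uses $\sin\angle(\vect{u_2},\vect{v_2}) \le C\sqrt{\sqrt{a+b}/(a-b)}$, which is the \emph{square root} of the Davis--Kahan quantity $\epsilon$. That lost square root is exactly what produces $\gamma \lesssim \sqrt{a+b}/(a-b)$ and hence the $1/\gamma^2$ hypothesis. Your orthogonalization avoids the loss: $|\langle P_W\vect{u_2}, P_W\vect{u_1}\rangle| = |\langle \vect{u_2}, P_W\vect{u_1}-\vect{u_1}\rangle| \le \epsilon$, so Gram--Schmidt inside $W$ moves $P_W\vect{u_2}$ by only $O(\epsilon)$, giving $\|\vect{v_2}-\vect{u_2}\| = O(\epsilon)$. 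Your pairing count, $\gamma \le 2\|\vect{v_2}-\vect{u_2}\|^2$, plays the role of the cited $\frac{4}{3}\sin^2$ inequality. Together they need only $(a-b)^2/(a+b) \ge C/\gamma$, which the stated hypothesis implies whenever $\gamma \le 1$.

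So the paper's route is shorter only because it takes the lossy lemma as a black box. Yours is self-contained past the norm bound and proves a strictly stronger statement. It also shows that Theorem~\ref{theorem:vector_angle_bound_Chin} is not sharp in the regime $\sqrt{a+b}/(a-b)\to 0$.

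Two small points to tidy up:
\begin{itemize}
\item With $d$ fixed and $n\to\infty$, Markov's inequality alone does not give probability $1-o(1)$ for the number of deleted high-degree vertices being $e^{-\Omega(d)}n$. You should cite or prove a concentration bound for this count, for example a second-moment argument, since the degree indicators are only weakly correlated. This is what makes $\|A_E - A_E'\| = O(\sqrt{d})$ legitimate.
\item Since $\vect{v_2}$ is defined only up to sign, the conclusion is $\gamma$-correctness up to swapping $V_1'$ and $V_2'$. The paper glosses over this as well.
\end{itemize}
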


The bound in Theorem \ref{theorem:spectral1} is weaker than that claimed in Theorem \ref{theorem:chin-theorem}. To achieve the inverse-log relationship, the original work requires a second \textbf{Correction} step (Figure \ref{fig:algoCorrection}), yielding the complete algorithm shown in Figure \ref{fig:algoPartition}. The correction mechanism works as follows: provided \textbf{Spectral Partition} achieves sufficiently low error rate $\gamma$, the \textbf{Correction} step reduces this to exponentially small values. 

\begin{figure}[ht]
\vskip 0.2in
\fbox{
\begin{minipage}{3.3in}
{\bf Correction.} 
\begin{enumerate}
\item Input: a partition $V_1', V_2'$ and a  Blue graph on $V_1' \cup V_2' $.
\item For any $u \in V_1'$, label $u$ {\it  bad} if the number of neighbors of $u$ in $V_2'$ is at least $\frac{a+b} {4} $ and {\it good} otherwise. 
\item Do the same for any $v \in V_2'$.
\item Correct $V_i'$ be deleting its bad vertices and adding the bad vertices from $V_{3-i} '$.
	\end{enumerate}
\end{minipage}
}
\caption{Correction}
\label{fig:algoCorrection}
\end{figure}

Specifically, Lemma 2.3 in \cite{chin2015stochasticblockmodelcommunity} establishes that if the input to \textbf{Correction} is $c$-correct for some $c>0$, then the output achieves $\gamma$-correctness with $\gamma = 2\exp\left(-f(c)\frac{(a-b)^2}{a+b} \right)$ where $f(c)>0$ depends only on $c$. The complete two-stage algorithm of Chin et al. is therefore the \textbf{Partition} procedure in Figure \ref{fig:algoPartition}.

\begin{figure}[ht]
\vskip 0.2in
\fbox{
\begin{minipage}{3.3in}
{\bf Partition } 
\begin{enumerate}
\item Input the adjacency matrix $A,d:= a+b$. 
\item Randomly color the edges with Red and  Blue with equal probability. 
	\item Run {\bf Spectral Partition}  on Red graph, outputting $V_1', V_2'$.
	\item Run {\bf Correction} on the Blue graph.
	\item Output the corrected sets   $V_1^{'}, V_2^{'} $.
\end{enumerate}
\end{minipage}
}
\caption{Partition}
\label{fig:algoPartition}
\end{figure}

\subsection{Our Modified Algorithm}

Our key modification to \textbf{Spectral Partition} eliminates step 2, which zeros out rows and columns corresponding to vertices with degree greater than $20d$. Instead, we work directly with the original adjacency matrix $A$ throughout the algorithm. While this preprocessing step was essential for two lemmas in the original analysis, it destroys the statistical independence of matrix entries in $A'$. By working with $A$ directly, we preserve the independent distribution of matrix entries and can subsequently maintain independence in the entries of eigenvector $\vect{w_2}$. This independence property proves crucial for our analysis in Section \ref{section:improved_bounds} and may help future algorithmic enhancements we explore in Section \ref{section:conclusion}.

The first theorem requiring step 2 is restated in Theorem \ref{theorem:M_bound_Chin}. Define $M = A - A_E$ as the difference between the observed and expected adjacency matrices. Let $M'$ denote the matrix obtained by applying the same row and column deletions to $M$ as performed on $A$ in step 2 of \textbf{Spectral Partition}. \cite{chin2015stochasticblockmodelcommunity} establish the following result:

\begin{theorem}
\label{theorem:M_bound_Chin}
    There exist constants $C_1, C_2$ such that if $a > b > C_1$, and matrix $M'$ is obtained as described above, then we have 
    \begin{equation}
        ||M'|| \leq C_2 \sqrt{a+b}
    \end{equation}
    with probability $1-o(1)$.
\end{theorem}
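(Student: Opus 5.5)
The plan follows the Feige--Ofek approach for sparse random matrices, applied to the centered matrix $M=A-A_E$ after high-degree vertices are removed. Write $d=a+b$ and let $H$ be the set of vertices with degree exceeding $20d$. Then $M'$ is the restriction of $M$ to $(V\setminus H)\times(V\setminus H)$, with zeros elsewhere. Since $\|M'\|=\sup_{x,y}|x^TM'y|$ over unit vectors, it suffices to bound $|x^TM'y|$ uniformly.

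The first step is to discretize. Take an $\epsilon$-net $T$ of the unit sphere in $\mathbb{R}^{2n}$ consisting of vectors whose coordinates are multiples of $\epsilon/\sqrt{2n}$; it has size $|T|\le (C/\epsilon)^{2n}$. A standard argument gives $\|M'\|\le (1-2\epsilon)^{-1}\sup_{x,y\in T}|x^TM'y|$. For fixed $x,y\in T$, split the index pairs into \emph{light pairs}, with $|x_iy_j|\le \sqrt{d}/(2n)$, and \emph{heavy pairs}, where the product is larger. Each part is bounded by $O(\sqrt d)$ separately.

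The light pairs are handled by concentration and a union bound. Let $Y$ be the light part of $x^TMy$ summed over all indices. It is a sum of independent centered bounded terms with variance at most $\sum_{i,j} x_i^2y_j^2\, d/n\le d/n$, and each term is bounded by $\sqrt d/n$. Bernstein's inequality gives $\Pr(|Y|>C\sqrt d)\le e^{-cC n}$, which beats $|T|^2=e^{O(n)}$ once $C$ is large. Two corrections are needed to pass from $Y$ to the light part of $x^TM'y$:
\begin{itemize}
\item the contribution of $A_E$ on the removed rows, together with the expectation of the heavy part, is bounded directly;
\item the light entries of $A$ in the rows and columns of $H$ must also be controlled.
\end{itemize}
For the heavy-pair expectation, $\sum_{\text{heavy}} |x_iy_j|\,d/n\le \sum x_i^2y_j^2\,(d/n)(2n/\sqrt d)\le 2\sqrt d$. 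For the removed rows, Chernoff bounds show that $|H|\le n e^{-cd}$ with high probability. The $A_E$ mass on $H$ is then at most $\sqrt{|H|}\cdot \sqrt{2n}\cdot d/n\cdot$ small, giving $O(\sqrt d)$ for $d>C_1$. Alternatively, the union bound can be taken over every possible small set $H$ jointly with $x,y$; the number of such sets is still $e^{o(n)}$ times a factor absorbed by Bernstein.

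The heavy pairs are the main obstacle. On these pairs only the nonnegative entries of $A'$ matter. Here I would prove the \emph{discrepancy property} of $A'$: with probability $1-o(1)$, for every pair of vertex sets $S,T$ in the pruned graph, with $|S|\le|T|$, the edge count $e(S,T)$ satisfies one of two conditions:
\begin{itemize}
\item $e(S,T)\le C\mu(S,T)$, where $\mu(S,T)=|S||T|d/n$; or
\item $e(S,T)\log\frac{e(S,T)}{\mu(S,T)}\le C|T|\log\frac{2n}{|T|}$.
\end{itemize}
The proof uses the degree bound $20d$ when $|T|\ge n/e$, and a Chernoff bound with a union over sets otherwise. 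Next, partition the coordinates of $x$ and of $y$ dyadically by magnitude, so that $S_s=\{i: 2^{s-1}\epsilon/\sqrt{2n}\le|x_i|<2^s\epsilon/\sqrt{2n}\}$, and similarly $T_t$ for $y$. Then bound $\sum_{\text{heavy}} |x_i|A'_{ij}|y_j|\le \sum_{s,t:\,2^{s+t}\gtrsim\sqrt d}2^{s+t}\epsilon^2 e(S_s,T_t)/(2n)$. Following the Feige--Ofek case analysis, this sum is $O(\sqrt d)$; it uses $\sum_s |S_s|2^{2s}\le C n/\epsilon^2$ from the unit norm of $x$. The bounded degree is exactly what handles the cases where one set is large. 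The delicate part is this case bookkeeping, which I expect to follow Feige--Ofek verbatim with the degree cap $20d$. Combining the light and heavy bounds gives $\|M'\|\le C_2\sqrt{a+b}$ with probability $1-o(1)$.
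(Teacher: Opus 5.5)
Your argument is correct, but it takes a different route from the paper's. Yours is essentially the Feige--Ofek proof behind the original Chin et al.\ lemma.

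\textbf{The paper's route.} It bounds the \emph{unpruned} matrix. It cites F\"uredi--Koml\'os for $\mathbb{E}\,\lambda_1(M)=2\sigma\sqrt{n}+O(n^{1/3}\log n)$ and Alon--Krivelevich--Vu for concentration. The statement for $M'$ then follows because $M'=DMD$, with $D$ the coordinate projection onto $V\setminus H$, so $\|M'\|\le\|M\|$. This takes a few lines and gives the deletion-free bound that the simplified algorithm actually needs. However, it only works when $a+b$ grows with $n$:
\begin{itemize}
\item The additive $O(n^{1/3}\log n)$ term swamps $2\sigma\sqrt{n}=O(\sqrt{a+b})$ unless roughly $a+b\gg n^{2/3}\log^2 n$. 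That holds for the experiments' $a=0.06n$.
\item For constant $a,b$ the deletion-free bound is false. The maximum degree is $\Delta\sim\log n/\log\log n$, which gives $\|M\|\ge\sqrt{\Delta}-(a+b)\to\infty$.
\end{itemize}

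\textbf{Your route.} It costs the discrepancy case analysis. In exchange, it uses the degree cap exactly where it is indispensable, namely the large-set case of the discrepancy property. It therefore proves the statement as written for all constants $a>b>C_1$, which the paper's argument cannot do.

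\textbf{Two tidy-ups.}
\begin{itemize}
\item Neither of your light-part corrections is needed. Since $x^TM'y=(Dx)^TM(Dy)$ and zeroing coordinates keeps $Dx,Dy$ in your grid net, the uniform light-pair bound for the unpruned $M$ already covers $M'$. The heavy pairs then see only edges of $A'$, and the heavy $A_E$ term is at most $2\sqrt d$ for any vectors. So no estimate on $|H|$ is required. That estimate would in any case need a second-moment argument, not just Chernoff, to hold with probability $1-o(1)$.
\item The Chernoff/union-bound half of the discrepancy property must be proved for $A$, whose entries are independent. It then transfers to the subgraph $A'$ because $x\mapsto x\log(x/\mu)$ is increasing for $x\ge\mu/e$.
\end{itemize}
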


Throughout this paper, $||M'||$ denotes the spectral norm $\sup\{||Mx||_2 : ||x||_2 \leq 1\}$, and all matrix norms follow this convention. While the original proof of Theorem \ref{theorem:M_bound_Chin} depends on the deletion step, we show that the bound holds without deletion, with only modest increases in the constants $C_1, C_2$. Our proof, which leverages techniques from \cite{Fredi1981TheEO} and \cite{krivelevich2000concentrationeigenvaluesrandomsymmetric}, is provided in the appendix.

The second lemma that depends on the deletion step appears in the \textbf{Correction} step analysis. Since our simplified algorithm eliminates this step entirely, we don't have to analyze the implications of our modification to this step. 

\section{Improved Error Bounds for Spectral Partition}
\label{section:improved_bounds}

\subsection{Original Error Bounds}

Let $W$ be the two-dimensional eigenspace corresponding to the top two eigenvalues of $A$, and let $W_E$ be the corresponding eigenspace of $A_E$. Chin et al. \cite{chin2015stochasticblockmodelcommunity} establish that the angle $\angle(W,W_E)$ between these subspaces is sufficiently small, where we use the standard convention $\sin \angle(W_1, W_2) := ||P_{W_1} - P_{W_2}||$ with $P_W$ denoting the orthogonal projection onto subspace $W$. 

As a consequence of this subspace proximity, the angle between $\vect{u_2}$ (the second eigenvector of $A_E$) and $\vect{v_2}$ (the vector obtained in step 5 of \textbf{Spectral Partition}) is also small. The key insight is that when these vectors are well-aligned, \textbf{Spectral Partition} produces an accurate community assignment. Specifically, the analysis in \cite{chin2015stochasticblockmodelcommunity} bounds $\sin \angle(\vect{u_2},\vect{v_2})$ and establishes the following result:

\begin{theorem}
\label{theorem:vector_angle_bound_Chin}
    There exist constants $C_1, C_2$ such that if $a > b > C_1$, and vectors $\vect{u_2},\vect{v_2}$ are as described above, then we have 
    \begin{equation}
        \sin \angle(\vect{u_2},\vect{v_2}) \leq C_2 \sqrt{\frac{\sqrt{a+b}}{a-b}}
    \end{equation}
    with probability $1-o(1)$.
\end{theorem}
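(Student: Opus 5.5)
We will argue in two stages: first bound the angle between the subspaces $W$ and $W_E$ using the norm bound on the perturbation, and then pass from the subspace angle to the angle between the specific vectors $\vect{u_2}$ and $\vect{v_2}$. Write $A = A_E + M$, and apply Theorem \ref{theorem:M_bound_Chin} (or its deletion-free version proved in the appendix) to get $||M|| \leq C\sqrt{a+b}$ with probability $1-o(1)$.

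\textbf{Subspace angle.} The nonzero eigenvalues of $A_E$ are $a+b$ and $a-b$, and the rest of its spectrum is $0$. So the spectral gap separating $W_E$ from its complement is $\delta = a-b$. By Weyl's inequality, the top two eigenvalues of $A$ are at least $a-b-C\sqrt{a+b}$, and all other eigenvalues are at most $C\sqrt{a+b}$. The Davis--Kahan $\sin\Theta$ theorem then gives
\begin{equation*}
\sin\angle(W,W_E) \leq \frac{C\sqrt{a+b}}{a-b-C\sqrt{a+b}} \leq \frac{2C\sqrt{a+b}}{a-b}.
\end{equation*}
We may assume $a-b \geq 2C\sqrt{a+b}$. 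Otherwise the right-hand side $C_2\sqrt{\sqrt{a+b}/(a-b)}$ of the claim exceeds $1$ once $C_2$ is chosen large, and the statement is trivial.

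\textbf{From subspaces to vectors.} Let $\vect{1}$ be the all-ones vector, and set $\vect{v_1} = P_W \vect{1}$. Since $\vect{1}/\sqrt{2n} = \vect{u_1} \in W_E$, we get $||\vect{v_1}/\sqrt{2n} - \vect{u_1}|| \leq \sin\angle(W,W_E) =: \epsilon$. Now $P_W \vect{u_2}$ lies in $W$ and is within $\epsilon$ of $\vect{u_2}$. Decompose it as a component along $\vect{v_1}$ plus a component along $\vect{v_2}$. The component along $\vect{v_1}$ is small, since $\langle P_W\vect{u_2}, \vect{v_1}\rangle = \langle \vect{u_2}, \vect{v_1}\rangle$ and $\vect{u_2}\perp\vect{u_1}$, which gives a bound of $O(\epsilon)$. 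Hence $\vect{u_2}$ is within $O(\epsilon)$ of a multiple of $\vect{v_2}$, so $\sin\angle(\vect{u_2},\vect{v_2}) = O(\epsilon)$.

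\textbf{Main obstacle.} The bound this yields is $\sin\angle(\vect{u_2},\vect{v_2}) = O(\sqrt{a+b}/(a-b))$, which is stronger than the stated bound. The stated bound is just its square root. The reason is that whenever $x = \sqrt{a+b}/(a-b) \leq 1$ we have $x \leq \sqrt{x}$, while in the regime $x > 1$ the claim is trivial as noted above. So the stated theorem follows directly.

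The one delicate point is the normalisation step. Here we need $||\vect{v_1}|| \geq \sqrt{2n}(1-\epsilon)$, so that the projection of $\vect{u_2}$ onto the unit direction $\vect{v_1}/||\vect{v_1}||$ is $O(\epsilon)$ rather than blowing up. This requires $\epsilon \leq 1/2$, which is secured by choosing $C_1, C_2$ large. I expect that the original argument loses the square root at exactly this point, through a cruder triangle inequality on the projections. If so, this sketch also foreshadows the improved bound the paper goes on to prove.
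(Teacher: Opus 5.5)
Your reduction is sound, and it is the route the paper has in mind. The paper gives no proof of its own; it only cites Chin et al.\ for two facts: $\angle(W,W_E)$ is small, and this forces $\angle(\vect{u_2},\vect{v_2})$ to be small. That is exactly your Davis--Kahan step followed by splitting $\vect{u_2}$ into its components along $\vect{v_1}/\|\vect{v_1}\|$, along $\vect{v_2}$, and orthogonal to $W$. These three pieces are orthogonal, so you in fact get $\sin\angle(\vect{u_2},\vect{v_2})\le\sqrt{5}\,\epsilon$. That is the linear bound, and it implies the stated one. One small correction: $\epsilon\le 1/2$ comes from the trivial-case reduction with a larger threshold (say $a-b\ge 4C\sqrt{a+b}$ and $C_2\ge 2\sqrt{C}$), not from taking $C_1$ large.

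The step that needs repair is the perturbation input, because Davis--Kahan must be applied to the matrix whose top eigenspace is $W$. For Spectral Partition as stated, with step 2, that matrix is $A'$. Then $A'-A_E=M'-(A_E-A_E')$, where $A_E'$ is $A_E$ with the same rows and columns zeroed. Theorem~\ref{theorem:M_bound_Chin} controls only $M'$, so you must add the bound $\|A_E-A_E'\|\le\|A_E-A_E'\|_F\le 2a\sqrt{k/n}$, where $k$ is the number of vertices of degree above $20d$. Since $k\le n e^{-\Omega(d)}$ with probability $1-o(1)$, this term is below $\sqrt{a+b}$ once $C_1$ is large.

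Do not fall back on the deletion-free version. For constant $a,b$ the maximum degree $\Delta$ grows like $\log n/\log\log n$. Since $A$ contains a star $K_{1,\Delta}$, we have $\lambda_1(A)\ge\sqrt{\Delta}$, hence $\|A-A_E\|\ge\sqrt{\Delta}-(a+b)\to\infty$; the top eigenvectors of the unpruned $A$ localize on high-degree vertices. So the appendix bound, and with it your argument for the unpruned algorithm, is only available in denser regimes ($a+b\gtrsim\log n$, e.g.\ the experimental setting $a=0.06n$).

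Finally, your closing guess misreads the paper. It does not improve this angle bound; it calls the bound tight and instead improves the passage from $\sin\theta$ to $\gamma$. Your linear bound actually shows the square-root form cannot be tight as $\sqrt{a+b}/(a-b)\to 0$.
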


Finally, \cite{chin2015stochasticblockmodelcommunity} shows that $\gamma \leq \frac{4}{3} \sin^2 \angle(\vect{u_2},\vect{v_2})$, which gives us the following result:

\begin{theorem}
\label{theorem:gamma_bound_quad_Chin}
    There exist constants $C_1, C_2$ such that if $a > b > C_1$, then we have 
    \begin{equation}
        \gamma \leq C_2 \frac{\sqrt{a+b}}{a-b}
    \end{equation}
    with probability $1-o(1)$.
\end{theorem}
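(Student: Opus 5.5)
The statement is an immediate consequence of two earlier results. Theorem \ref{theorem:vector_angle_bound_Chin} controls $\sin \angle(\vect{u_2},\vect{v_2})$, and the deterministic inequality $\gamma \leq \frac{4}{3}\sin^2\angle(\vect{u_2},\vect{v_2})$ converts that angle into a misclassification fraction. We fix the constants $C_1, C_2'$ from Theorem \ref{theorem:vector_angle_bound_Chin}. On the event of probability $1-o(1)$ on which that theorem holds, we have
\begin{equation*}
\sin^2\angle(\vect{u_2},\vect{v_2}) \leq C_2'^2 \frac{\sqrt{a+b}}{a-b}.
\end{equation*}
Substituting into the inequality gives $\gamma \leq \frac{4}{3}C_2'^2\frac{\sqrt{a+b}}{a-b}$, so the statement holds with $C_2 = \frac{4}{3}C_2'^2$ and the same $C_1$. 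Since the inequality is deterministic, no probability is lost in this step.

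For completeness I would also justify the deterministic inequality rather than only cite it. Write $\vect{v_2} = \cos\theta\,\vect{u_2} + \sin\theta\,\vect{e}$ with $\vect{e}\perp\vect{u_2}$ a unit vector, after choosing the sign of $\vect{v_2}$ so that $\cos\theta \ge 0$. A vertex $i$ ends up on the wrong side of the sorting threshold only if $|\vect{v_2}(i) - c\,\vect{u_2}(i)|$ is of order $1/\sqrt{2n}$ for a suitable scaling $c$ comparable to $\cos\theta$. Hence the number of misplaced vertices is at most a constant times
\begin{equation*}
2n\,\|\vect{v_2} - c\,\vect{u_2}\|_2^2 = O(n \sin^2\theta).
\end{equation*}
Normalizing by $n$ gives $\gamma = O(\sin^2\theta)$. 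The specific constant $4/3$ comes from the paper's careful count, which uses the median split into top-$n$ and bottom-$n$ vertices.

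The only delicate point is this thresholding step. The partition takes the top $n$ entries rather than splitting by sign, so the threshold is data-dependent. The counting argument must therefore show that the median of $\vect{v_2}$ stays within a constant fraction of $1/\sqrt{2n}$ of zero whenever $\sin\theta$ is small, and that each misclassification forces a coordinate deviation of at least a constant times $1/\sqrt{2n}$. I would handle this by comparing the top-$n$ rule with the sign rule: each rule's error set is charged to coordinates with large deviation, and a Markov/Chebyshev-type count on $\|\vect{v_2}-c\vect{u_2}\|_2^2$ bounds the number of such coordinates. Once that is in place, the theorem follows directly from the chain of inequalities above.
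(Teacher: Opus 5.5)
Your proposal is correct and follows the paper's proof: square the bound of Theorem~\ref{theorem:vector_angle_bound_Chin}, substitute it into the deterministic inequality $\gamma \leq \frac{4}{3}\sin^2\angle(\vect{u_2},\vect{v_2})$ from Chin et al., and take $C_2 = \frac{4}{3}C_2'^2$ with no loss of probability. In your optional justification, the two claims you say must be shown are false: the cutoff need not be near zero, and a single misplaced vertex need not have a large deviation (one $V_2$ entry tied with the $V_1$ entries near $1/\sqrt{2n}$ breaks both). Neither is needed, because the top-$n$ rule misplaces the same number $k$ of vertices in each direction, so their net contribution to $\sqrt{2n}\,\langle \vect{v_2},\vect{u_2}\rangle$ is at most $0$, and Cauchy--Schwarz on the $2(n-k)$ correctly placed entries gives $\cos\theta \leq \sqrt{1-\gamma}$, i.e.\ $\gamma \leq \sin^2\theta$.
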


which proves Theorem \ref{theorem:spectral1}.

Our experiments reveal that Theorem \ref{theorem:vector_angle_bound_Chin} is tight, while Theorem \ref{theorem:gamma_bound_quad_Chin} is not. In general, Theorem \ref{theorem:gamma_bound_quad_Chin} is indeed sharp. There exist vectors $\vect{u_2}, \vect{v_2}$ achieving equality up to a constant factor. However, the \textbf{Spectral Algorithm} produces vectors $\vect{v_2}$ with specific structural properties that render this bound loose. We prove that under these properties, significantly tighter bounds are achievable.

\subsection{Sharpness Of Theorem \ref{theorem:gamma_bound_quad_Chin}}

To establish the sharpness of Theorem \ref{theorem:gamma_bound_quad_Chin}, we formulate the following optimization problem. Let $x_1, \ldots, x_{2n}$ denote the entries of $\vect{v_2}$ with $\sum x_i^2 = 1$ and $x_1 \geq \cdots \geq x_{2n}$. The partition step assigns indices $\{1, \ldots, n\}$ to community $V_1$ and $\{n+1, \ldots, 2n\}$ to community $V_2$. For fixed error rate $\gamma$, let $k = \gamma n$ (assuming $k$ is integer), representing the number of misclassified vertices.

Our goal is to minimize the angle $\theta = \angle(\vect{u_2},\vect{v_2})$ subject to fixed $\gamma$, equivalent to maximizing $\cos \theta$. The true community indicator satisfies $w_i = 1/\sqrt{2n}$ for $i \in V_1$ and $w_i = -1/\sqrt{2n}$ for $i \in V_2$. Without misclassification:
$$\cos \theta = \sum_{i=1}^{2n} x_i w_i = \frac{1}{\sqrt{2n}} \left( \sum_{i=1}^{n}x_i  - \sum_{i=n+1}^{2n}x_i \right)$$

To maximize $\cos \theta$ under exactly $k$ misclassifications, the optimal strategy places errors among entries with smallest magnitudes. Specifically, vertices $\{n-k+1, \ldots, n\}$ from $V_1$ are misassigned to $V_2$, while vertices $\{n+1, \ldots, n+k\}$ from $V_2$ are misassigned to $V_1$, yielding:
\begin{multline}
\label{eq:cos_objective}
\cos \theta \leq \frac{1}{\sqrt{2n}} \left( \sum_{i=1}^{n-k}x_i - \sum_{i=n-k+1}^{n}x_i \right) \\
+ \frac{1}{\sqrt{2n}} \left( \sum_{i=n+1}^{n+k}x_i - \sum_{i=n+k+1}^{2n}x_i \right)
\end{multline}

This bound is achieved by the assignment $x_1 = \cdots = x_{n-k} = 1/\sqrt{2(n-k)}$, $x_{n-k+1} = \cdots = x_{n+k} = 0$, and $x_{n+k+1} = \cdots = x_{2n} = -1/\sqrt{2(n-k)}$, which satisfies the normalization constraint and yields $\cos \theta = \sqrt{1 - \gamma}$. Therefore $\gamma = \sin^2 \theta$, confirming that Theorem \ref{theorem:gamma_bound_quad_Chin} is sharp up to constants.

\subsection{Statistical Properties of the Second Eigenvector}

\cite{abbe2019entrywiseeigenvectoranalysisrandom} demonstrate that the second eigenvector can be approximated as $\vect{w_2} \approx \frac{A \vect{u_2}}{a-b}$ with error bound $|| \vect{w_2} - \frac{A \vect{u_2}}{a-b}||_\infty = o(1 / \sqrt{n})$

The denominator $a-b$ is irrelevant as $\vect{w_2}$ will be scaled to be a unit vector. Thus we now focus on characterizing the distribution of $A \vect{u_2}$. For vertex $i \in V_1$, the $i$-th entry of $A \vect{u_2}$ equals the difference between the number of edges between $i$ and vertices in $V_1$, and the number of edges between $i$ and vertices in $V_2$. Since each edge appears independently with probability $a/n$ (within-community) or $b/n$ (between-community), this entry follows the distribution of a difference of two binomial random variables. Specifically, let 
\begin{equation}
\label{eq:distribution}
    Y \sim \text{Binomial}(n, a/n) - \text{Binomial}(n, b/n)
\end{equation}

Then each entry of $A \vect{u_2}$ is distributed as $Y$ or $-Y$ with equal probability, depending on whether $i \in V_1$ or $i \in V_2$.

\subsection{Applying Chernoff Bounds to Relate $\gamma$ and $\sin \theta$}
\label{section:chernoff}

Building on the optimization framework above, while we know the approximate distribution of the $x_i$ entries, direct analysis remains computationally intractable. Instead, we leverage constraints derived from Chernoff concentration inequalities applied to this distribution. The Chernoff bound states that for a random variable $X$ with moment generating function $M(t)$:
$$P(X \geq a) \leq M(t) e^{-ta} \quad \forall a, \forall t > 0$$

This bound becomes increasingly sharp in the tail regions for large values of $a$. For approximately bell-shaped distributions, Chernoff bounds at multiple points constrain the distribution's tail behavior, effectively providing lower bounds on how "concentrated" the distribution must be around its center.

Applied to our ordered sequence $x_1 \geq x_2 \geq \cdots \geq x_n$, these concentration properties impose lower bounds on the decay rates between consecutive entries:
$$\frac{x_1}{x_2}, \frac{x_2}{x_3}, \ldots, \frac{x_{n-1}}{x_n}$$

Define $p_a = a/n$, $q_a = 1-p_a$, $p_b = b/n$, $q_b = 1-p_b$, and the optimal Chernoff parameter $t^* =\frac{1}{2} \ln \left( \frac{p_aq_b}{q_ap_b}\right)$. Let the concentration constant be:
\begin{align}
  C &= \frac{1}{2}(\sqrt{p_ap_b} + \sqrt{q_aq_b})^{2n} \\
  &+ \frac{1}{2}\left( \sqrt{\frac{q_a^3p_b^3}{p_aq_b}} + \sqrt{\frac{p_a^3q_b^3}{q_ap_b}} + q_aq_b + p_ap_b\right)^n  
\end{align}

The Chernoff concentration inequalities translate into the following optimization constraints:
\begin{multline}    
x_1^2 + \cdots + x_{2n}^2 \leq 1 \\
x_{i+1} \leq \frac{\ln C + \ln (2n+1) - \ln (i+1)}{\ln C + \ln (2n+1) - \ln i} x_{i} \quad \forall i \in[1,n-1] \\
\frac{x_i}{x_{i+1}} \geq \frac{\ln C + \ln (2n+1) - \ln (2n+1-i)}{\ln C + \ln (2n+1) - \ln (2n-i)}  \\
\forall i \in [n+1,2n-1]
\end{multline}

The complete derivation appears in the appendix. Since $C$ is known before any optimization, these constraints together with Equation \ref{eq:cos_objective} as the objective function define a convex optimization problem. We solve this optimization problem numerically to find the maximum value of $\cos \theta$ subject to the above constraints. Our theoretical analysis predicts this maximum should satisfy (proof in the appendix):
\begin{equation}
\label{eq:chernoff_prediction}
  \cos \theta \leq \frac{\sqrt{2n}}{t^*} (1- \gamma) \left( \ln C + 1 + \ln \frac{2 + \frac{1}{n}}{1 - \gamma}\right)  
\end{equation}

\begin{figure}[h!]
    \centering
    \begin{subfigure}[b]{0.45\textwidth}
        \centering
        \includegraphics[width=\textwidth]{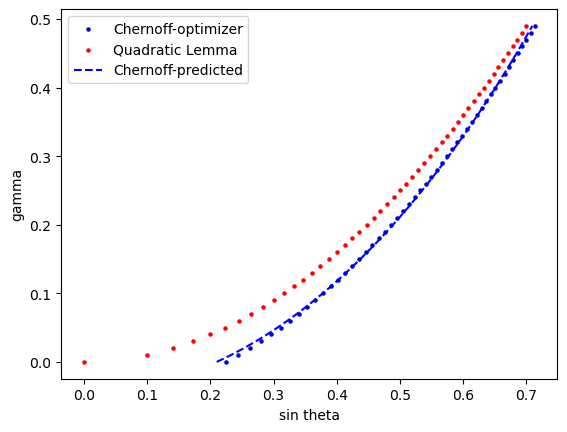}
        \caption{$\gamma$ as a function of $\sin \theta$: Theorem \ref{theorem:gamma_bound_quad_Chin} and Chernoff-derived bounds}
        \label{fig:chernoff1}

    \end{subfigure}%
    \hfill
    \begin{subfigure}[b]{0.45\textwidth}
        \centering
        \includegraphics[width=\textwidth]{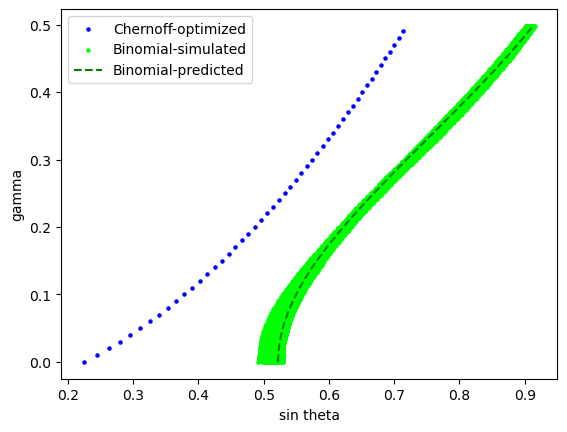}
        \caption{$\gamma$ as a function of $\sin \theta$: Chernoff-derived bounds and Monte Carlo / Normal approximations}
        \label{fig:normal1}
    \end{subfigure}
    \caption{$\gamma$ as a function of $\sin \theta$ for various approaches}
    \label{fig:combined_figure}
\end{figure}

Figure \ref{fig:chernoff1} presents our experimental validation results for $n=500, a=0.06n, b=0.04n$. The red points represent the relationship from Theorem \ref{theorem:gamma_bound_quad_Chin}, while the blue points show the actual optimization results under our Chernoff-derived constraints. The blue line displays our theoretical prediction from Equation \ref{eq:chernoff_prediction}, fitted to the optimization data using ordinary least squares (OLS) regression to account for the unit normalization of the $x_i$ vector.

The results demonstrate that our Chernoff-based analysis yields significantly tighter bounds than the original theorem. For any given value of $\sin \theta$, our approach provides a substantially lower upper bound on the achievable error rate $\gamma$. Furthermore, the close agreement between the blue line and blue points confirms the accuracy of our theoretical prediction in Equation \ref{eq:chernoff_prediction}.

\subsection{Monte-Carlo Simulation and Normal Approximation to Relate $\gamma$ and $\sin \theta$}
\label{section:montecarlo}

Given the distribution in Equation \ref{eq:distribution}, we can directly generate samples of the $x_i$ entries using Monte Carlo methods, removing the need for numerical optimization. With the $x_i$ values generated from their distribution, Equation \ref{eq:cos_objective} provides the maximum $\cos \theta$ for any given error level $k$.

While we could compute Equation \ref{eq:cos_objective} directly using the exact probability density function, this approach is algebraically intractable. Instead, we use a normal approximation to simplify the analysis. The binomial distributions in our model satisfy the standard approximation conditions: both $np \geq 20$ and $n(1-p) \geq 20$ hold for our parameter ranges, so that the approximation is reasonable.

Under this normal approximation, the difference of binomials $Y$ also approaches normality, and consequently each entry $X_i$ becomes approximately normal. This normality assumption enables us to derive a closed-form theoretical prediction for the performance bound. Using the normal approximation and the structure of our optimization problem, we obtain the following theoretical prediction (with derivation provided in the appendix):

\begin{multline}    
\label{eq:normal_prediction}
  \cos \theta \leq \frac{2(2n+1)}{\sqrt{2n}} \biggl[ 2 \phi\left(-\Phi^{-1} \left( \frac{1 - \gamma}{2 + 1/n}\right)\right) \\
  - \phi \left( -\Phi^{-1} \left(\frac{1}{2 + 1/n} \right)\right) \biggr]
\end{multline}

where $\phi$ and $\Phi$ denote the standard normal probability density function and cumulative distribution function, respectively. In the derivation above, we assumed that the entries $x_i$ follow a standard normal distribution with mean 0 and unit variance. While the zero-mean assumption is valid, the unit variance assumption is not. The actual entries will have a different variance determined by the underlying binomial distributions and the problem parameters. However, since the final vector must satisfy the normalization constraint $\sum x_i^2 = 1$, the entries will be appropriately scaled regardless of their original variance. The theoretical prediction in Equation \ref{eq:normal_prediction} captures the correct functional relationship between $\gamma$ and $\cos \theta$, but with a scaling factor that depends on the actual variance of the entries.

Figure \ref{fig:normal1} presents our experimental validation using the same parameters as before: $n=500$, $a=0.06n$, $b=0.04n$. We conducted Monte Carlo simulations with 50 repetitions to minimize random variation in our results. The green points represent the $(\sin \theta, \gamma)$ pairs computed from each simulation run, forming a "band" due to the natural clustering of results across repetitions. The green dashed line shows our theoretical prediction from Equation \ref{eq:normal_prediction}, fitted to the simulation data using OLS regression to account for the normalization constraint. For comparison, we include the blue points from our earlier Chernoff-based analysis (Section \ref{section:chernoff}, Figure \ref{fig:chernoff1}). The results validate several important aspects of our theoretical framework:

First, the close agreement between the green dashed line and the simulation points confirms that our normal approximation in Equation \ref{eq:normal_prediction} accurately captures the underlying relationship between error rate and spectral alignment.

Next, the green band lies well below the blue points, demonstrating that while our Chernoff-derived bounds are mathematically sound, they remain conservative estimates. The gap between these approaches becomes particularly pronounced for small error rates, precisely the region most relevant for practical applications. This suggests that the Chernoff bounds, though tight in a worst-case sense, do not fully capture the distributional properties that emerge in typical use cases.

 Perhaps most significantly, both our simulation and Chernoff analysis reveal that perfect community recovery ($\gamma = 0$) is achievable even when the eigenvectors $\vect{u_2}$ and $\vect{v_2}$ are not perfectly aligned ($\sin \theta > 0$). This indicates that the spectral method's success depends not merely on eigenvector alignment, but more fundamentally on whether the entry distribution of $\vect{v_2}$ preserves sufficient structure to enable correct partitioning. In other words, the distributional shape of the eigenvector entries often contains enough information to guarantee perfect classification, even in the presence of some spectral distortion.

\section{Comparing Theoretical Predictions with Spectral Algorithm Results}
\label{section:experiments}

While the results in Section \ref{section:improved_bounds} significantly improve upon the original bounds in Theorem \ref{theorem:gamma_bound_quad_Chin}, all our theoretical analyses rely on the distributional approximation given in Equation \ref{eq:distribution}. As noted previously, this approximation contains errors that, while decreasing as $O(1/\sqrt{n})$, may still affect the accuracy of our predictions for finite sample sizes.

To validate our theoretical framework against the actual spectral algorithm performance, we conduct direct experiments on randomly generated graphs. We generate stochastic block model instances with edge probabilities $a = 0.06n$ and $b = 0.04n$ across a range of graph sizes $n \in \{500, 525, 550, \ldots, 1000\}$. For each instance, we apply our modified \textbf{Spectral Partition} algorithm (omitting the degree-based deletion step) and evaluate both the error rate $\gamma$ (comparing the algorithm's partition against the true community structure) and
$\theta$ (the angle between the true second eigenvector $\vect{u_2}$ and the computed approximation to second eigenvector $\vect{v_2}$).

Furthermore, to provide comprehensive validation across different problem scales, we repeated all the analyses from Section \ref{section:improved_bounds} for the complete range of graph sizes $n \in \{500,\ldots, 1000\}$, rather than limiting our evaluation to $n = 500$. These results, including both the Chernoff-based optimization bounds and the Monte Carlo simulation predictions, are consolidated alongside the direct spectral algorithm experiments in Figure \ref{fig:all}.

The figure uses opacity to represent graph size, with $n=500$ shown as nearly transparent points and $n=1000$ as fully opaque points, creating a visual gradient across problem scales. Different colors distinguish the various analytical approaches:

\textbf{Red Points (Theoretical Baseline):} These represent the quadratic bound from Theorem \ref{theorem:gamma_bound_quad_Chin}. Since this bound follows the relationship $\gamma = \sin^2 \theta$, which is independent of $n$, the red points of different opacities overlap completely, forming a single curve.

\textbf{Blue Points (Chernoff Analysis):} These show our Chernoff-derived bounds from Section \ref{section:chernoff}. As $n$ increases, the achievable frontier moves upward, indicating that the bounds become less tight for larger graphs. This behavior reflects the conservative nature of concentration inequalities for finite sample sizes.

\textbf{Green Points (Monte Carlo Simulation):} These represent our normal approximation approach validated through simulation, with 10 repetitions per value of $n$. Similar to the Chernoff bounds, the frontier shifts upward with increasing $n$, particularly in the low-$\gamma$ regime.

\textbf{Orange Points and Purple Fit (Direct Algorithm Results):} The orange points show the actual performance of our modified \textbf{Spectral Partition} algorithm on randomly generated graphs. To these experimental results, we fit the empirical relationship:
\begin{equation}
\label{eq:log_predicted}
    \sin \theta = \frac{C}{\sqrt[4]{\log 2 / \gamma}}
\end{equation}
using OLS regression, with the resulting fitted curve displayed as the purple line.

\textbf{Theoretical Significance:} The functional form in Equation \ref{eq:log_predicted}, combined with the claims of Theorems \ref{theorem:M_bound_Chin} and \ref{theorem:vector_angle_bound_Chin}, directly yields the final result stated in Theorem \ref{theorem:chin-theorem}, thus bridging our empirical observations with the theoretical framework.

\subsection{Scaling Behavior and Convergence Analysis}

Several important trends emerge as $n$ increases while maintaining constant ratios $a/n$ and $b/n$. The community detection problem becomes inherently easier for larger graphs, as predicted by both Theorem \ref{theorem:chin-theorem} and Theorem \ref{theorem:gamma_bound_quad_Chin}, which allow for smaller error rates $\gamma$ as their left-hand sides increase. This theoretical prediction is confirmed in our results, where larger $n$ values (higher opacity points) consistently achieve lower $\gamma$ values.

More significantly, the gap between the orange points (direct algorithm results) and green points (simulation predictions) of matching opacity decreases with increasing $n$. This convergence validates the error bound which asserts that approximation errors decrease as $O(1/\sqrt{n})$. The observed convergence demonstrates that for large $n$ in the low-$\gamma$ regime, the relationship in Equation \ref{eq:log_predicted} and our theoretical prediction in Equation \ref{eq:normal_prediction} align closely.

This convergence provides strong empirical support for our central claim: \textbf{Spectral Partition} alone achieves near information-theoretic performance without requiring the additional \textbf{Correction} step, particularly as problem size increases and error rates decrease, precisely the regime most relevant for practical applications.

\begin{figure}[h]
\centering
\includegraphics[width=0.45\textwidth]{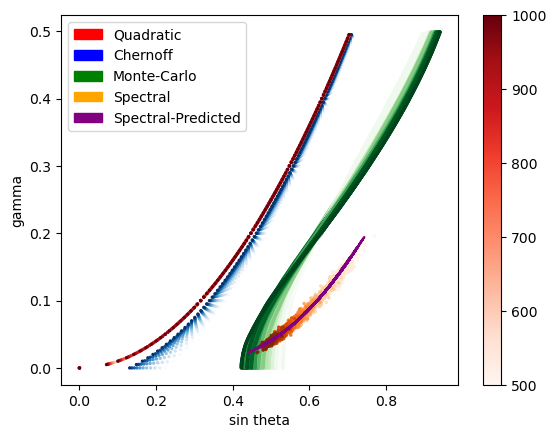}

\caption{$\gamma$ as a function of $\sin \theta$ for various approaches including experimental results}
\label{fig:all}
\end{figure}

\section{Conclusion And Future Work}
\label{section:conclusion}

We demonstrate that the spectral algorithm achieves near information-theoretic performance, through elimination of degree-based preprocessing and the correction step. Our theoretical analysis through Chernoff bounds, normal approximations, and Monte Carlo validation shows that spectral partition alone can achieve the inverse-logarithmic error rates previously thought to require additional correction steps.

Experimental validation across varying graph sizes confirms that our theoretical predictions become increasingly accurate as the error goes down with $O(1/\sqrt{n})$, with the empirical relationship $\sin \theta = C/\sqrt[4]{\log 2 / \gamma}$ bridging our results to established theoretical frameworks. The convergence between multiple analytical approaches in the large-$n$, low-$\gamma$ regime validates our central finding: spectral partition alone suffices for near-optimal community recovery.

These results challenge the assumption that algorithmic complexity improves performance, suggesting instead that careful theoretical analysis can reveal hidden strengths in existing methods. This "less is more" principle may have broader implications for spectral algorithm design.

Several directions emerge from this research: extending our analysis to unbalanced and multi-community cases, analyzing multiple samples derived from the same distributions, developing enhanced inference procedures, investigating computational scaling for massive graphs, analyzing robustness under model misspecification, establishing precise connections to information-theoretic limits, and exploring whether similar simplifications yield improvements in related spectral problems such as graph clustering and matrix completion. The statistical independence between matrix and vector entries preserved by our approach should facilitate these future investigations, as this independence structure can be leveraged for more sophisticated statistical inference and analysis techniques that would be complicated or impossible under the dependencies introduced by traditional preprocessing steps.

\section{Reproducibility Statement}
To ensure reproducibility, we provide complete implementation details with specified parameters: graph sizes $n \in \{500, \ldots, 1000\}$, edge probabilities $a = 0.06n$ and $b = 0.04n$, and our modified Spectral Partition algorithm that eliminates the degree-based deletion step. Monte Carlo simulations use 50 repetitions for distributional analysis and 10 repetitions for scaling experiments. All random seed numbers are initialized to ensure total reproducibility. All code required to reproduce our experiments, including scripts that regenerate every figure and numerical result, is publicly available at https://github.com/hendrata-th/sbm-simplify-to-amplify. 

\bibliography{mybib}
\bibliographystyle{bibstyle}

\appendix
\section{Appendix}

\subsection{Proof of Theorem \ref{theorem:M_bound_Chin}}
\begin{proof}
The matrix $A$ has entries $A_{ij}$ that are sampled from a Bernoulli distribution with success probability $p_{ij}$ where $p_{ij} = a/n$ if $i,j$ belong to the same community, and $p_{ij} = b/n$ otherwise. Therefore, the entries of matrix $M$ have mean zero and variance $\sigma_{ij}^2 = p_{ij}(1-p_{ij}) \leq \sigma^2 $ where $\sigma^2$ is the maximum variance of a single element.

Because $0 < b < a < n/2$ we have:
\begin{align}
    \sigma_{ij}^2 \leq \sigma^2 &= \max \left( \frac{a}{n}(1-\frac{a}{n}), \frac{b}{n}(1-\frac{b}{n}) \right) \\
    &= \frac{a}{n} \left( 1 - \frac{a}{n}\right) \leq \frac{a+b}{n}
\end{align}

Let $\lambda_1(M)$ be the largest eigenvalue of $M$. Because $M$ is real-valued and symmetric, $\lambda_1(M) = ||M||$. Now we use the result from \cite{Fredi1981TheEO} to determine $\mathbb{E}[\lambda_1(M)]$. Since all entries have mean zero and variance at most $\sigma^2$, we have:

\begin{equation}
    \label{eq:furedi}
    \mathbb{E}[\lambda_1(M)] = 2 \sigma \sqrt{n} + O(n^{1/3} \log n)
\end{equation}

For large enough $n$, the first term dominates. So $\mathbb{E}[\lambda_1(M)] = O(\sigma \sqrt{n})$. Note: \cite{Fredi1981TheEO} uses the premise that all entries have mean zero and common variance, but \cite{krivelevich2000concentrationeigenvaluesrandomsymmetric} showed that the assumption of common variance can be relaxed to $Var[M_{ij}] \leq \sigma^2$.

Next, also according to \cite{krivelevich2000concentrationeigenvaluesrandomsymmetric}, there are positive constants $c$ and $K$ such that for any $t > K$,

\begin{equation}
    \label{eq:krivelevich}
    P \left[ | \lambda_1(M) - \mathbb{E}[ \lambda_1(M) ] | \geq t \right] \leq e^{-ct^2}
\end{equation}

Combining equations \ref{eq:furedi} and \ref{eq:krivelevich}, there is a constant $C_2$ such that for large enough $b$ (and consequently $a,n$), we have with probability $1-o(1)$:

\begin{equation}
    ||M|| \leq C_2 \sigma \sqrt{n} \leq C_2 \frac{\sqrt{a+b}}{\sqrt{n}} \sqrt{n}
\end{equation}
which completes the proof for Theorem \ref{theorem:M_bound_Chin}.
\end{proof}

\subsection{Proof Of Formulation and Prediction From Section \ref{section:chernoff}}

\subsubsection{Deriving the Moment Generating Function}

We start by computing the moment generating function (MGF) for our random variables. Recall that $Y$ represents the difference between two binomial distributions. The MGF of $Y$ is:
$$M_Y(t) = (q_a + p_ae^t)^{n} (q_b + p_be^{-t})^{n} $$

Since $-Y$ has MGF $M_{-Y}(t) = M_Y(-t)$, and each entry $X_i$ of our vector is equally likely to be $Y$ or $-Y$, the MGF of $X_i$ becomes:
\begin{align*}
M_{X_i}(t) &= \frac{M_Y(t) + M_Y(-t)}{2} \\
&=\frac{(q_a + p_ae^t)^{n} (q_b + p_be^{-t})^{n}}{2} \\
&+ \frac{(q_a + p_ae^{-t})^{n} (q_b + p_be^{t})^{n}}{2}
\end{align*}

\subsubsection{Applying Chernoff Bounds}

The Chernoff bound gives us:
$$P(X_i \geq a) \leq M_{X_i}(t) e^{-at} \quad \forall t > 0$$

This inequality holds for any positive $t$, but we want to choose the value that gives us the tightest bound. For positive values of $a$, the distribution is dominated by the $Y$ component rather than the $-Y$ component. The optimal choice turns out to be:
$$t^* =\frac{1}{2} \ln \left( \frac{p_aq_b}{q_ap_b}\right)$$

Note that $t^* > 0$ because we assume $p_a > p_b$ (within-community edges are more likely than between-community edges). Substituting this optimal value, we get:
$$P(X_i \geq a) \leq C e^{-at^*}$$ 

where the constant $C$ depends only on the model parameters $n$, $a$, and $b$:
\begin{align*}
C &= \frac{1}{2}(\sqrt{p_ap_b} + \sqrt{q_aq_b})^{2n} \\
&+ \frac{1}{2}\left( \sqrt{\frac{q_a^3p_b^3}{p_aq_b}} + \sqrt{\frac{p_a^3q_b^3}{q_ap_b}} + q_aq_b + p_ap_b\right)^n
\end{align*}

\subsubsection{Converting Bounds to Optimization Constraints}

Now we connect this probabilistic bound to our optimization problem. If $x_i$ is the $i$-th largest element in our sorted vector, and assuming the entries follow the theoretical distribution reasonably well, then the probability that a random entry exceeds $x_i$ should be approximately $\frac{i}{2n+1}$ (since $i$ entries are larger than $x_i$ out of $2n+1$ total positions). Therefore:
$$\frac{i}{2n+1} \leq C \cdot e^{-t^*x_i}$$

Solving for $x_i$:
$$x_i \leq \frac{\ln C + \ln (2n+1) - \ln i }{t^*}$$

For the negative tail (when $i > n$), we use the symmetry of the bounds with $t$ replaced by $-t$, giving us:
$$x_i \geq -\frac{\ln C + \ln (2n+1) - \ln (2n+1-i) }{t^*}$$

\subsubsection{Formulating the Complete Optimization Problem}

Since all these quantities are known given the model parameters $n$, $a$, and $b$, we can incorporate them into our optimization framework. However, we also need to ensure the resulting vector has unit norm. We introduce the following constraints:
\begin{multline*}    
x_1^2 + \cdots + x_{2n}^2 \leq 1 \\
x_{i+1} \leq \frac{\ln C + \ln (2n+1) - \ln (i+1)}{\ln C + \ln (2n+1) - \ln i} x_{i} \quad \forall i \in[1,n-1] \\
\frac{x_i}{x_{i+1}} \geq \frac{\ln C + \ln (2n+1) - \ln (2n+1-i)}{\ln C + \ln (2n+1) - \ln (2n-i)}  \\
\forall i \in [n+1,2n-1]
\end{multline*}

\subsubsection{Why This Formulation Works}

Let us elaborate why this setup correctly captures our intentions:

First, regarding the normalization constraint $\sum x_i^2 \leq 1$: We use an inequality rather than equality to make this a convex optimization problem, which can be solved efficiently. However, the optimal solution will automatically satisfy $\sum x_i^2 = 1$. Here's why: if we have a feasible vector $\mathbf{x}$ with $\sum x_i^2 < 1$, we can scale it up by some factor $\lambda > 1$ to get $\lambda\mathbf{x}$ with $\sum (\lambda x_i)^2 = 1$. Since our objective function $\cos \theta$ is positive (by construction) and linear in the entries, scaling up only improves the objective value. Therefore, the optimizer will naturally choose the boundary case where the constraint becomes tight.

Second, regarding the ratio constraints: The Chernoff bounds fundamentally limit how quickly the entries can decay as we move from the largest to the smallest values. The ratio constraints enforce that consecutive entries cannot decay faster than what the Chernoff bounds would allow. Specifically, all entries $x_2, \ldots, x_n$ are constrained relative to $x_1$ through these ratios, and all entries $x_{n+1}, \ldots, x_{2n-1}$ are constrained relative to $x_{2n}$. 

If some of these ratio constraints become strict (meaning the actual ratios are smaller than the bounds allow), this doesn't violate our theoretical framework—it simply means the actual distribution has even better concentration than our worst-case analysis predicts. Combined with the normalization argument above, the optimizer will find the largest possible $x_1$ and smallest possible $x_{2n}$ (in absolute value) such that the vector has unit norm, while respecting the decay rates imposed by the Chernoff bounds.

\subsubsection{Deriving Cumulative Sum Approximations}

Starting from our Chernoff-derived bound:
$$x_i \leq \frac{\ln C + \ln (2n+1) - \ln i }{t^*}$$

We want to approximate the partial sums $s_j = \sum_{i=1}^j x_i$. Applying our bound:
\begin{align*}
s_j &\leq \sum_{i=1}^j \frac{\ln C + \ln (2n+1) - \ln i }{t^*} \\
&= \frac{j \ln C}{t^*} + \frac{1}{t^*} \sum_{i=1}^j \ln \left( \frac{2+1/n}{i/n}\right)
\end{align*}

For large $n$, we can approximate the discrete sum with a continuous integral:
\begin{align*}
s_j &\simeq \frac{j \ln C}{t^*} + \frac{n}{t^*} \int_0^{j/n} \ln \left( \frac{2+1/n}{x}\right) dx \\
&= \frac{j}{t^*} \left( \ln C + \ln\left( \frac{2n+1}{j} \right) + 1\right)
\end{align*}

Note that this approximation is only accurate when $j$ is large, which will be the case for our intended application where we consider $j = n-k$ with small $k$.

\subsubsection{Applying the Approximation to Our Objective Function}

Recall that our objective function is (from Equation \ref{eq:cos_objective}):
\begin{multline*}
\cos \theta \leq \frac{1}{\sqrt{2n}} \left( \sum_{i=1}^{n-k}x_i - \sum_{i=n-k+1}^{n}x_i \right) \\
+ \frac{1}{\sqrt{2n}} \left( \sum_{i=n+1}^{n+k}x_i - \sum_{i=n+k+1}^{2n}x_i \right)    
\end{multline*}

We make two key observations about the optimal solution structure:

First, due to the symmetry of our distribution and constraints, the entries exhibit approximate symmetry around the center: $x_{n+i} \simeq x_{n+1-i}$ for $i = 1,\ldots,n$. This allows us to simplify our expression:
$$\cos \theta \leq \frac{2}{\sqrt{2n}} \left( \sum_{i=1}^{n-k}x_i - \sum_{i=n-k+1}^{n}x_i \right) $$

Second, and more importantly, our Chernoff constraints only establish lower bounds on the ratios $\frac{x_i}{x_{i+1}}$. They don't prevent the optimizer from making some entries arbitrarily small. In particular, nothing stops the optimizer from setting $x_{n-k+1} = \cdots = x_n = 0$ and concentrating all the "budget" (from the normalization constraint $\sum x_i^2 = 1$) into $x_1, \ldots, x_{n-k}$. 

Turns out this is exactly what happens in our experiments. The optimizer pushes the middle entries to zero while maximizing the contribution from the largest entries, which doesn't violate our requirement that the entries must be "at least this concentrated" according to the Chernoff bounds.

Therefore, our objective simplifies to:
$$\cos \theta \leq \frac{2}{\sqrt{2n}}  \sum_{i=1}^{n-k}x_i  = \frac{2}{\sqrt{2n}} s_{n-k} $$

\subsubsection{Final Theoretical Prediction}

Since $k = \gamma n$ is small, $n-k$ is large, making our integral approximation valid. Substituting our approximation for $s_{n-k}$:
\begin{align*}
\cos \theta &\leq \frac{2}{\sqrt{2n}} s_{n-k} \\
&\simeq  \frac{2}{\sqrt{2n}} \cdot \frac{(n-k)}{t^*} \left[\ln C + 1 + \ln\left( \frac{2n+1}{n-k}\right)  \right] \\
&=\frac{\sqrt{2n}}{t^*} (1- \gamma) \left( \ln C + 1 + \ln \frac{2 + \frac{1}{n}}{1 - \gamma}\right)
\end{align*}

Which proves Equation \ref{eq:chernoff_prediction}. This bound represents our theoretical prediction for the maximum achievable $\cos \theta$ under the Chernoff-derived constraints.

\subsection{Proof Of Formulation and Prediction From Section \ref{section:montecarlo}}

Under the assumption that the entries $x_i$ follow a standard normal distribution, they effectively partition the probability density function $\phi$ into $2n+1$ equal quantile intervals. This means:
$$x_i = \Phi^{-1}\left( 1 - \frac{i}{2n+1}\right) = -\Phi^{-1}\left( \frac{i}{2n+1}\right) $$

The partial sum becomes:
\begin{align*}
  s_i &= \sum_{j=1}^i x_j = -\sum_{j=1}^i \Phi^{-1}\left( \frac{j}{2n+1}\right) \\
  &= -n \sum_{j=1}^i \frac{1}{n} \Phi^{-1}\left( \frac{j/n}{2+1/n}\right)  
\end{align*}

For large $n$ and $i$, we can approximate this discrete sum with a continuous integral:
$$s_i \approx -n \int_{0}^{i/n} \Phi^{-1} \left( \frac{x}{2 + 1/n} \right) dx $$

To evaluate this integral, we use substitution. Let:
$$u = \Phi^{-1} \left( \frac{x}{2 + 1/n} \right)$$

Which means $\Phi(u) = \frac{x}{2 + 1/n}$, so $x = (2 + 1/n)\Phi(u)$ and $dx = (2 + 1/n)\phi(u) du$. When $x = \frac{i}{n}$, we have $u = \Phi^{-1} \left( \frac{i}{2n+1} \right) = -x_i$. Substituting into our integral:
\begin{align*}
 s_i &\approx -n \int_{-\infty}^{-x_i} u \cdot \left( 2 + \frac{1}{n} \right) \phi(u) du \\
 &= -(2n+1) \int_{-\infty}^{-x_i} u \phi(u) du
\end{align*}

Making another substitution $v = -u$ (so $dv = -du$):
$$ s_i = (2n+1) \int_{x_i}^{\infty} v \phi(v) dv$$

Now we can evaluate this integral directly using the substitution $w = v^2/2$ (so $dw = v dv$):
\begin{align*}
  \int_{x_i}^{\infty} v \phi(v) dv &= \frac{1}{\sqrt{2\pi}} \int_{x_i}^{\infty} v e^{-v^2/2} dv \\
  &= \frac{1}{\sqrt{2\pi}} \int_{x_i^2/2}^{\infty} e^{-w} dw = \frac{ e^{-x_i^2/2}}{\sqrt{2\pi}} = \phi(x_i)  
\end{align*}

Therefore $s_i \approx (2n+1) \phi(x_i)$. Since both $n$ and $n-k$ are large (with $k$ small), our integral approximation is valid for both $s_n$ and $s_{n-k}$. Accounting for the symmetry in our problem, we have:
$$\cos \theta \leq \frac{2}{\sqrt{2n}} (2s_{n-k} - s_n) = \frac{2}{\sqrt{2n}} (2n+1)(2\phi(x_{n-k}) - \phi(x_n))$$

Substituting the explicit expressions:
$$x_{n-k} = -\Phi^{-1} \left( \frac{n-k}{2n+1}\right) = -\Phi^{-1} \left( \frac{1 - \gamma}{2 + 1/n}\right)$$
$$x_{n} = -\Phi^{-1} \left( \frac{n}{2n+1}\right) = -\Phi^{-1} \left( \frac{1}{2 + 1/n}\right)$$

This completes the proof of Equation \ref{eq:normal_prediction}.

\end{document}